\documentclass[conference,letterpaper]{IEEEtran}

\usepackage[T1]{fontenc}% optional T1 font encoding
\usepackage{amssymb}
\usepackage[utf8]{inputenc}
\usepackage[english]{babel}
\usepackage{amsmath}			
\usepackage{courier}

\usepackage[left=0.6in, right=0.6in, top=0.65in, bottom=0.6in]{geometry}

\usepackage{comment}				
\usepackage[T1]{fontenc}
\usepackage{mathtools}
\usepackage{algcompatible,amsmath}
\usepackage[ruled,vlined]{algorithm2e}
\usepackage{graphicx}
\usepackage{multicol}
\usepackage{afterpage}
\usepackage[noend]{algpseudocode}
\usepackage{color}
\usepackage{etoolbox}
\usepackage{url}
\usepackage{siunitx,booktabs}
\usepackage{placeins}
\usepackage{bbm}
\usepackage{soul}
\usepackage[noadjust]{cite}
\DeclareMathOperator{\tr}{tr}
\usepackage{amsfonts}

\usepackage{accents}

\normalsize
\makeatletter
\def\underbracex#1#2{\mathop{\vtop{\m@th\ialign{##\crcr
				$\hfil\displaystyle{#2}\hfil$\crcr
				\noalign{\kern3\p@\nointerlineskip}%
				#1\crcr\noalign{\kern3\p@}}}}\limits}

\def\upbracefilla{$\m@th \setbox\z@\hbox{$\braceld$}%
	\bracelu\leaders\vrule \@height\ht\z@ \@depth\z@\hfill 
	\kern\p@\vrule \@width\p@\kern\p@\vrule \@width\p@\kern\p@\vrule \@width\p@
	$}

\def\upbracefillb{$\m@th \setbox\z@\hbox{$\braceld$}%
	\vrule \@width\p@\kern\p@\vrule \@width\p@\kern\p@\vrule \@width\p@\kern\p@
	\leaders\vrule \@height\ht\z@ \@depth\z@\hfill\bracerd
	\braceld\leaders\vrule \@height\ht\z@ \@depth\z@\hfill
	\kern\p@\vrule \@width\p@\kern\p@\vrule \@width\p@\kern\p@\vrule \@width\p@
	$}

\def\upbracefillc{$\m@th \setbox\z@\hbox{$\braceld$}%
	\vrule \@width\p@\kern\p@\vrule \@width\p@\kern\p@\vrule \@width\p@\kern\p@
	\leaders\vrule \@height\ht\z@ \@depth\z@\hfill
	\kern\p@\vrule \@width\p@\kern\p@\vrule \@width\p@\kern\p@\vrule \@width\p@
	$}

\def\upbracefilld{$\m@th \setbox\z@\hbox{$\braceld$}%
	\vrule \@width\p@\kern\p@\vrule \@width\p@\kern\p@\vrule \@width\p@\kern\p@
	\leaders\vrule \@height\ht\z@ \@depth\z@\hfill\braceru$}

\def\underbracex#1#2{\mathop{\vtop{\m@th\ialign{##\crcr
				$\hfil\displaystyle{#2}\hfil$\crcr
				\noalign{\kern3\p@\nointerlineskip}%
				#1\crcr\noalign{\kern3\p@}}}}\limits}

\def\upbracefilla{$\m@th \setbox\z@\hbox{$\braceld$}%
	\bracelu\leaders\vrule \@height\ht\z@ \@depth\z@\hfill 
	\kern\p@\vrule \@width\p@\kern\p@\vrule \@width\p@\kern\p@\vrule \@width\p@
	$}

\def\upbracefillb{$\m@th \setbox\z@\hbox{$\braceld$}%
	\vrule \@width\p@\kern\p@\vrule \@width\p@\kern\p@\vrule \@width\p@\kern\p@
	\leaders\vrule \@height\ht\z@ \@depth\z@\hfill\bracerd
	\braceld\leaders\vrule \@height\ht\z@ \@depth\z@\hfill
	\kern\p@\vrule \@width\p@\kern\p@\vrule \@width\p@\kern\p@\vrule \@width\p@
	$}

\def\upbracefillbd{$\m@th \setbox\z@\hbox{$\braceld$}%
	\vrule \@width\p@\kern\p@\vrule \@width\p@\kern\p@\vrule \@width\p@\kern\p@
	\bracerd\braceld
	\leaders\vrule \@height\ht\z@ \@depth\z@\hfill\braceru$}

\makeatletter
\patchcmd{\@makecaption}
{\scshape}
{}
{} 
{}
\makeatother
\usepackage{subcaption}
\usepackage{graphicx}
\usepackage{amsthm}

\newtheorem{proposition}{Proposition}

\newtheorem*{proof*}{Proof}

\newcommand{\norm}[1]{\Big\lVert#1\Big\rVert}

\IEEEoverridecommandlockouts
\makeatletter
\let\origIEEEPARstart\IEEEPARstart
\renewcommand{\IEEEPARstart}[3][1.1]{%
	\def\@IEEEPARstartDROPDEPTH{#1\baselineskip}%
	\origIEEEPARstart{#2}{#3}%
}

\def\endthebibliography{%
	\def\@noitemerr{\@latex@warning{Empty `thebibliography' environment}}%
	\endlist
}
\makeatletter
\newcommand{\reducespacing}{\def\@IEEEaftertitletext{}}
\makeatother

\ifCLASSINFOpdf

\else

\fi

\begin{document}
	\bstctlcite{BSTcontrol}
	\setlength{\columnsep}{0.21in}   % place this right here, before \maketitle

	\title{Mask-Compliant Clipping-Aware Precoding for Multi-User MIMO-OFDM Systems}

	\author{
		\IEEEauthorblockN{Navid Reyhanian\IEEEauthorrefmark{1},
			Parisa Ramezani\IEEEauthorrefmark{2},
			Emil Bj\"ornson\IEEEauthorrefmark{2}}
		\IEEEauthorblockA{\IEEEauthorrefmark{1}Cisco Systems, Milpitas, CA 95035, USA}
		\IEEEauthorblockA{\IEEEauthorrefmark{2}Department of Communication Systems, KTH Royal Institute of Technology, Stockholm, Sweden}
		\IEEEauthorblockA{nreyhani@cisco.com, \{parram,emilbjo\}@kth.se}
	}	

	\allowdisplaybreaks

	\maketitle
	\setlength{\columnsep}{0.21in}   % MUST be after \maketitle
	\thispagestyle{empty}
	\pagestyle{empty}
	
	\begin{abstract}
		Orthogonal frequency-division multiplexing (OFDM) is widely adopted in frequency-selective channels for its ability to simplify equalization, yet it also causes large signal peaks, out-of-band (OOB) emissions, and spectral leakage. We study downlink precoding/combining for multi-user multiple-input multiple-output (MU-MIMO)--OFDM systems by minimizing the sum of the users' mean-squared errors (MSEs) under per-subcarrier transmit-power limits, per-antenna OOB spectral-mask constraints, and per-antenna peak-amplitude (clipping) constraints, which confine the emitted spectrum and limit waveform peaks to reduce power-amplifier saturation, nonlinear distortion, and spectral regrowth. The resulting nonconvex problem is handled by a minimum mean-squared error (MMSE) based block coordinate descent (BCD), with a closed-form combiner update and an alternating direction method of multipliers (ADMM) algorithm with closed-form updates for the precoder subproblem. Simulations show clear gains over well-known benchmark schemes.
	\end{abstract}
	\begin{IEEEkeywords}
		Spectral mask, clipping, precoding/combining, MU-MIMO-OFDM, problem decomposition, BCD.
	\end{IEEEkeywords}
	
	\IEEEpeerreviewmaketitle

	\section{Introduction}
	\label{sec:introduction}
Frequency-selective channels pose significant challenges for reliable wireless communication. Orthogonal frequency-division multiplexing (OFDM) addresses this by converting the channel into parallel subcarriers, thereby simplifying equalization. However, OFDM still faces practical transmitter impairments \cite{reyhanian2025precodinguplinkrisassistedcellfree}. For example, its time-domain signal may exhibit large instantaneous peaks, and the resulting multicarrier waveform can produce sidelobes that cause out-of-band (OOB) emissions and spectral leakage \cite{7366560}. These effects are especially important in practice, since OOB radiation may interfere with adjacent channels and can violate regulatory or standard-imposed spectral limits. To further improve spectral efficiency and manage interference, multiple-input multiple-output (MIMO) deployments with large antenna arrays are increasingly adopted. In particular, multi-user MIMO (MU-MIMO) enables a base station (BS) to serve multiple users simultaneously on the same time–frequency resources, but doing so reliably requires precoding and receiver combining that both provide array gain and actively control inter-user interference \cite{11588254}.
	
	To keep OFDM signals within the operating range of digital-to-analog converters and, in particular, power amplifiers (PAs), amplitude clipping, or equivalently per-antenna peak-amplitude constraints, is often used. Although this helps prevent PA saturation and severe nonlinear distortion, it can also cause in-band distortion and spectral regrowth that worsens OOB leakage \cite{9151131}. Practical designs must therefore jointly manage multi-user interference, waveform peaks, and per-antenna spectral masks such as adjacent channel leakage ratio (ACLR) or mask-based limits \cite{9390405,10048700,reyhanian2026jointspatialspectralhybrid,reyhanian2026symbollevelmaskcomplianthybridprecoding}.
	
A substantial literature studies spectral shaping for realized OFDM symbols, including mask-compliant precoding and notching for suppressing OOB emissions at selected frequencies \cite{van2009sculpting,6459499,7485853,9214877}. These methods have been combined with MIMO precoding schemes such as zero-forcing (ZF) and maximum-ratio transmission (MRT) \cite{taheri2020joint,9214877,9785463}. 
However, existing approaches separate spatial and spectral processing and do not jointly enforce per-antenna mask and peak constraints. In MU-MIMO-OFDM, cascading independently-designed spectral and spatial precoders distorts the spatially-precoded signals, amplifying inter-user interference and degrading spectral efficiency.

Existing regulations constrain instantaneous peak emissions, both in-band and out-of-band. In particular, the ultra wideband (UWB) rule specifies a peak-power ceiling of 0\,dBm/50\,MHz~\cite{fcc_uwb_15_517}, and Section~96.41 for citizens broadband radio service (CBRS) requires OOB compliance to be assessed through peak-detected measurements~\cite{fcc_cbrs_96_41}. This is especially relevant here because nonlinear analog components placed after the digital precoder can generate spectral regrowth. Therefore, to ensure regulatory compliance in the transmitted waveform, the OOB spectral mask should be imposed on the emitted spectrum of each OFDM symbol \cite{9214877}.
	
	In this paper, we develop a unified downlink MU-MIMO-OFDM precoding/combining framework that employs symbol-specific precoding and symbol-agnostic receive combining to minimize the user sum mean-squared error (sum-MSE) under explicit per-subcarrier transmit-power constraints, per-antenna peak-amplitude (clipping) limits for waveform peak control and PA saturation mitigation, and per-antenna OOB spectral-mask constraints for each realized symbol vector. In contrast to many existing approaches that mainly aim to reduce OOB emissions as much as possible, our framework is designed to guarantee compliance with arbitrary required spectral limits.
	The resulting joint design tightly couples the spatial and spectral dimensions and
	yields a challenging nonconvex optimization.
	To solve this problem efficiently, we propose a minimum mean squared error (MMSE)-driven block coordinate descent (BCD) algorithm that alternates between updating combiners and precoders. A closed-form solution is derived for the combiner. The precoder step is solved using a scalable alternating direction method of
	multipliers (ADMM) that splits the update into several simpler substeps by introducing auxiliary variables: one set captures the
	frequency-domain precoded signals, another represents the corresponding time-domain waveforms used to enforce peak (clipping) limits,
	and a third represents the sampled spectrum used to enforce the OOB emission mask. This splitting turns a large constrained problem
	into solving a sequence of easy subproblems with closed-form solutions or one-dimensional bisection-searches. We further provide convergence guarantees
	and demonstrate via simulations that the proposed method achieves strict OOB mask compliance while delivering clear
	performance gains over benchmark MU-MIMO-OFDM precoding schemes.

\section{System Model}
We study the downlink of an MU-MIMO-OFDM system in which a BS with $N_t$ transmit antennas serves $K$ users over $S$ subcarriers.
The subcarrier index set is defined as $\mathcal{S}=\{0,1,\ldots,S-1\}$. Each user is equipped with $N_r$ receive antennas and is
scheduled on all subcarriers. A fully digital architecture is assumed at the transmitter and receiver. For subcarrier $s$, the precoder used to deliver $n_k$ streams to user $k$ is
$\mathbf{V}_{k}^s\in\mathbb{C}^{N_t \times n_k }$, and the corresponding transmitted symbol vector is
$\boldsymbol{\omega}_{k}^s \in \mathbb{C}^{n_k \times 1}$. We treat $\boldsymbol{\omega}_{k}^s\in \boldsymbol{\Omega}_0$ as a zero-mean normalized QAM symbol vector with
$\mathbb E[\boldsymbol{\omega}_j^s(\boldsymbol{\omega}_j^s)^H]=\mathbf I_{n_j}$ and
$\mathbb E[\boldsymbol{\omega}_j^s(\boldsymbol{\omega}_\ell^s)^H]=\mathbf 0$ for $j\neq \ell$. 
Once realized, $\boldsymbol{\omega}_{k}^s$ is assumed known at the transmitter, which allows symbol-dependent precoding.
We consider a batch of $B$ independent symbol realizations $\{\boldsymbol{\omega}^{(b)}\}_{b=1}^B$, $\boldsymbol{\omega}^{(b)}\triangleq\{\boldsymbol\omega_k^{s,(b)}\}_{k,s}\in \boldsymbol{\Omega}$, and for each $b$ the BS designs $\mathbf V_k^{s,(b)}$ and forms $
\mathbf t^{s,(b)} \triangleq \sum_{j=1}^K \mathbf V_j^{s,(b)}\boldsymbol\omega_j^{s,(b)} \in \mathbb C^{N_t}.$ The combiners $\{\mathbf U_k^s\}$ are shared across $b$. Below, the superscript $(b)$ is shown only where needed.

The received signal
$\mathbf{y}_{k}^{s,(b)} \in \mathbb{C}^{N_r \times 1}$ at user $k$ on subcarrier $s$ is expressed as
	\begin{equation}
		\mathbf{y}_k^{s,(b)} 
		= \mathbf{H}_k^s \sum_{j=1}^K \mathbf{V}_j^{s,(b)} \boldsymbol{\omega}_j^{s,(b)} + \mathbf{n}_k^{s,(b)}=\mathbf H_k^s\mathbf t^{s,(b)}+\mathbf n_k^{s,(b)},
		\nonumber
	\end{equation}
where $\mathbf{H}_k^s \in \mathbb{C}^{N_r \times N_t}$ denotes the channel from the BS to user $k$ on subcarrier $s$, and
$\mathbf{n}_k^{s,(b)} \in \mathbb{C}^{N_r \times 1}$ is additive white  Gaussian noise distributed as
$\mathcal{CN}(\mathbf{0}, \sigma_{\text{noise},k}^{s\,2}\mathbf{I}_{N_r})$.
At the receiver, user $k$ employs a combiner. The decoded signal vector on subcarrier $s$ at the $k^\text{th}$ user is $\hat{\boldsymbol{\omega}}_k^{s,(b)}
= \mathbf{U}_k^{s^H}\mathbf{y}_k^{s,(b)},
\mathbf{U}_{k}^s\in\mathbb{C}^{N_r \times n_k }.$

\subsubsection{Power Constraint for Precoding}
The BS transmit budget on each subcarrier is limited according to
	\begin{align}
		\norm{\sum_{j=1}^K \mathbf{V}_j^{s,(b)} \boldsymbol{\omega}_j^{s,(b)}}_2^2=\|\mathbf t^{s,(b)}\|_2^2
		\leq P^s,
		\label{eq:powerbudget_digital}
	\end{align}
where $P^s$ is the total BS power on subcarrier $s$. 

\subsubsection{Clipping Constraint}

With an oversampling factor $\ell$, the cyclic-prefix (CP) inclusive OFDM symbol emitted by the $a^\text{th}$ antenna is described in discrete time over
$n\in\{-\ell N_{\mathrm{CP}},-\ell N_{\mathrm{CP}}+1,\ldots,\ell S-1\}$. Over this interval, the transmitted samples follow \cite{van2009sculpting}
\begin{equation}
	x_{\text{CP}}^{a,(b)}[n] = \mathbf{p}^T[n]\mathbf{g}^{a,(b)}
	= \sum_{s\in \mathcal{S}} p^{s}[n] \mathbf{g}^{a,(b)}[s],
	\label{eq:ofdm_sym_digital}
\end{equation}
where $\mathbf{p}[n] = [p^{0}[n], \ldots, p^{S-1}[n]]^T$ collects the subcarrier-modulated pulses (sampled on the $\ell S$-point grid) and
$\mathbf{g}^{a,(b)}\in\mathbb{C}^{S}$ stacks the corresponding precoded frequency-domain symbols associated with the $a^\text{th}$ antenna.
We have
\begin{equation}
	\mathbf g^{a,(b)}[s]=\mathbf t^{s,(b)}[a],\qquad a=1,\ldots,N_t,\ \ s\in\mathcal S .
	\label{eq:ga_def_digital}
\end{equation}

The discrete-time OFDM pulse with subcarrier modulation for the $s^\text{th}$ subcarrier is defined as \cite{van2009sculpting,9214877,9390405}
\begin{equation}
	p^s[n] = \frac{1}{\sqrt{\ell S}}\,e^{\jmath 2\pi \frac{s}{\ell S} n}\, I[n], \:\:\: s\in \mathcal{S},
	\label{eq:timedomain_digital}
\end{equation}
where $I[n]$ is the indicator function given by $I[n] = 1$ for $-\ell N_{\mathrm{CP}} \leq n \leq \ell S-1$ and $I[n] = 0$ otherwise,
with $N_{\mathrm{CP}}$ representing the CP length (in samples). The rectangular window assumption provides a worst-case
characterization of OOB emissions since it yields the slowest spectral decay among common pulse shapes; thus, any suppression
demonstrated here is expected to be at least as good when band-limited pulses are employed.

Fix $a\in\{1,\ldots,N_t\}$ and consider an $\ell S$-point DFT grid. Define the (oversampled) IDFT matrix
$\mathbf F_{\ell S}^{H}\in\mathbb C^{\ell S\times S}$ as $
\mathbf F_{\ell S}^{H}[n,s]
\triangleq \frac{1}{\sqrt{\ell S}}e^{\jmath 2\pi \frac{ns}{\ell S}}, n\in\{0,1,\ldots,\ell S-1\},\ \ s\in\mathcal{S}.$
Then, the useful (no-CP) time-domain block $\mathbf x^{a,(b)}\in\mathbb C^{\ell S}$ is
\begin{equation}
	\mathbf x^{a,(b)}=\mathbf F_{\ell S}^{H}\mathbf g^{a,(b)}.
	\label{eq:ifft_digital}
\end{equation}

Large peaks in the time-domain OFDM waveform drive the PA into saturation and cause nonlinear distortion and spectral regrowth. A common remedy is amplitude clipping, which caps the signal magnitude at a threshold $\chi$, but clipping itself introduces in-band distortion and worsens OOB leakage. We instead design the precoder so that the realized per-antenna waveform inherently satisfies the peak limit, requiring
\begin{equation}
	\| \mathbf{x}^{a,(b)} \|_{\infty} \leq \chi ,\qquad \forall a,b.
	\label{eq:clipping_constraint_digital}
\end{equation}
Since the CP is formed by copying the last $\ell N_{\mathrm{CP}}$ samples of the useful OFDM block, enforcing $|\mathbf x^{a,(b)}[n]|\le \chi$ for $0\le n\le \ell S-1$ automatically guarantees the same bound on the CP samples $n\in\{-\ell N_{\mathrm{CP}},\ldots,-1\}$.

\subsubsection{Spectral Mask Constraint}

We enforce a spectral mask to limit OOB emissions, with denser suppression near band edges. Let $F_{s,\ell}$ denote the sampling rate of the oversampled discrete-time waveform.
For spectral sampling, introduce a set of (possibly non-integer) locations
$\{\gamma_i\}_{i=0}^{M-1}\subset\mathbb R$ expressed in DFT-bin units on the $\ell S$ grid; the location $\gamma_i$
corresponds to the tangible baseband frequency $f_i=\frac{\gamma_i}{\ell S}F_{s,\ell}$ (Hz). Integer $\gamma_i$ coincide with
$\ell S$-point DFT-bin centers, whereas non-integer $\gamma_i$ evaluate the DFT between bins.

Using the rectangular window in \eqref{eq:timedomain_digital}, define the sampling matrix $\mathbf A\in\mathbb C^{M\times S}$ by
evaluating that formula at each $\gamma_i$, i.e.,

\begin{align}
	&	\mathbf A[i,s]
	=
	\frac{1}{\sqrt{\ell S}}
	\exp\!\left(\jmath\pi\frac{\gamma_i-s}{\ell S}\,(\ell N_{\mathrm{CP}}-\ell S+1)\right)\nonumber\\
	&\times
	\frac{\sin\!\left(\pi\frac{\gamma_i-s}{\ell S}(\ell S+\ell N_{\mathrm{CP}})\right)}
	{\sin\!\left(\pi\frac{\gamma_i-s}{\ell S}\right)},
	\qquad \gamma_i-s\notin \ell S\,\mathbb Z,
	\label{eq:A_closed}
\end{align}
and $\mathbf A[i,s]=L/\sqrt{\ell S}$ when $\gamma_i-s\in \ell S\,\mathbb Z$, with $L\triangleq \ell S+\ell N_{\mathrm{CP}}$ \cite{9214877,van2009sculpting}.

For a fixed antenna $a$, \eqref{eq:ofdm_sym_digital}--\eqref{eq:timedomain_digital} imply that the CP-inclusive spectrum evaluated at
the sampled locations satisfies
\begin{align}
		X^{a,(b)}(\gamma_i)&
	\triangleq 
	\hspace{-.3cm}	\sum_{n=-\ell N_{\mathrm{CP}}}^{\ell S-1}\hspace{-.4cm} x_{\text{CP}}^{a,(b)}[n]\;e^{-\jmath 2\pi \frac{\gamma_i}{\ell S}n}
	=
	\sum_{s\in\mathcal S}\hspace{-.1cm}\mathbf g^{a,(b)}[s]\;\mathbf A[i,s]\nonumber\\
	&	=
	\mathbf A[i,:]\mathbf g^{a,(b)},
	\nonumber
\end{align}
and a (single-symbol) periodogram-type PSD sample at $\gamma_i$ is
	\begin{equation}
		\widehat S_{x_{\text{CP}}^{a,(b)} x_{\text{CP}}^{a,(b)}}(\gamma_i)
		\;\triangleq\;
		\frac{1}{L\,F_{s,\ell}}\bigl|X^{a,(b)}(\gamma_i)\bigr|^2
		\;=\;
		\frac{1}{L\,F_{s,\ell}}\bigl|\mathbf A[i,:]\mathbf g^{a,(b)}\bigr|^2.\nonumber
	\end{equation}
We use the single-symbol periodogram $\widehat S_{x_{\text{CP}}^{a,(b)} x_{\text{CP}}^{a,(b)}}(\gamma)= \frac{1}{L F_{s,\ell}}|X^{a,(b)}(\gamma)|^2$ as a measure of spectral leakage. The spectral mask is enforced at a finite set of mask frequencies $\{f_1,\ldots,f_G\}$, distinct from the generic sampling locations $\{\gamma_i\}_{i=0}^{M-1}$: the latter can be densely chosen over a wide frequency range for PSD visualization, whereas $\{f_j\}$ is a design set used only for compliance. Following \cite{van2009sculpting}, the mask points are placed near band edges and interference regions, like an effective discrete surrogate for a continuous mask.

We map each physical mask frequency $f_j$ to its DFT-bin location on the $\ell S$ grid via
$\gamma_j \triangleq \frac{\ell S}{F_{s,\ell}}f_j$. Let $\mathbf A_n\in\mathbb C^{G\times S}$ denote the matrix obtained by
evaluating \eqref{eq:A_closed} at $\{\gamma_j\}_{j=1}^{G}$ (equivalently, selecting the corresponding rows of $\mathbf A$
when $\{\gamma_j\}\subset\{\gamma_i\}$). Then, $
X^{a,(b)}(\gamma_j)=\mathbf A_n[j,:]\mathbf g^{a,(b)}, j=1,\ldots,G.$
The mask constraints at antenna $a$ are imposed as
\begin{equation}
	\frac{\left|\mathbf A_n \mathbf g^{a,(b)}\right|^{\circ 2}}{L\,F_{s,\ell}} \preceq \frac{\mathbf r}{L\,F_{s,\ell}},
	\qquad \mathbf r=[\mathbf r[1],\ldots,\mathbf r[G]]^T, \forall b, \forall a,
	\label{eq:mask_constraint}
\end{equation}
with $\mathbf r[j] \triangleq L\,F_{s,\ell}\, S_{\text{max}}(f_j)$, and  $S_{\text{max}}(f_j)$ being the maximum allowable PSD. Equivalently, \eqref{eq:mask_constraint} designs $\{\mathbf V_k^{s,(b)}\}$ so that each $\mathbf g^{a,(b)}$ meets the prescribed spectral limits at $\{f_1,\ldots,f_G\}$.

\section{Problem Formulation}
\label{sec:probform}
We adopt a hybrid sum-MSE objective that reflects the asymmetric symbol knowledge: the transmitter knows the realized QAM tuple and designs $\{\mathbf t^{s,(b)},\mathbf g^{a,(b)},\mathbf x^{a,(b)}\}$ pointwise per $b$, whereas the receiver does not and therefore employs symbol-agnostic linear combiners $\{\mathbf U_k^s\}$. A compatible $\{\mathbf V_k^{s,(b)}\}$ is recovered from $\{\mathbf t^{s,(b)}\}$ via Proposition~\ref{prop:recover_V_from_t}.
With estimation error $\mathbf e_k^{s,(b)}\triangleq \mathbf U_k^{sH}\mathbf y_k^{s,(b)}-\boldsymbol{\omega}_k^{s,(b)}$, the problem is
	\begin{equation}\label{opt:main}
		\begin{aligned}
			\min_{\{\mathbf U_k^s\},\,\{\mathbf t^{s,(b)}\},\,\{\mathbf x^{a,(b)}\},\,\{\mathbf g^{a,(b)}\}}
			\quad
			& J \triangleq \tfrac{1}{B}\sum_{b,s,k}
			\mathbb E_{\mathbf n}\!\left[\|\mathbf e_k^{s,(b)}\|_2^2\right] \\
			\text{s.t.}\quad
			& \eqref{eq:powerbudget_digital},\ \eqref{eq:ga_def_digital},\ \eqref{eq:ifft_digital},\ \eqref{eq:clipping_constraint_digital},\ \eqref{eq:mask_constraint},
			\qquad \forall\,b.
		\end{aligned}
	\end{equation}
The constraints are imposed per realization $b$, while $J$ averages noise-MSE across the batch to match the receiver's symbol-agnostic operation.
\section{The Proposed BCD-Based Algorithm}
\label{sec:alg} 
Note that $J=\tfrac{1}{B}\sum_{b,s,k}\|\mathbf U_k^{sH}\mathbf H_k^s\mathbf t^{s,(b)}-\boldsymbol{\omega}_k^{s,(b)}\|_2^2+\sum_{s,k}\sigma_{\mathrm{noise},k}^{s\,2}\tr(\mathbf U_k^{sH}\mathbf U_k^s)$. At each BCD iteration, the $B$ transmit-side subproblems are independent across $b$ and solved in parallel via the ADMM of Section~\ref{subsec:admm_blocks_revised_u}; the resulting $\{\mathbf t^{s,(b)}\}_{b=1}^B$ form the sample-average covariances for the combiner update.
\subsection{The Transmit-Side Subproblem}
For fixed $\{\mathbf U_k^s\}$, the transmit-side part of $J$ equals $\tfrac{1}{B}\sum_{b,s,k}\|\mathbf B_k^s\mathbf t^{s,(b)}-\boldsymbol{\omega}_k^{s,(b)}\|_2^2+\mathrm{const}$, $\mathbf B_k^s\triangleq \mathbf U_k^{sH}\mathbf H_k^s$. Since cost and constraints decouple across $b$, the problem separates into $B$ pointwise subproblems, one per $b$:
\begin{equation}\label{opt:main-revised}
	\begin{aligned}
		\min_{\{\mathbf t^{s,(b)}\},\{\mathbf x^{a,(b)}\},\{\mathbf g^{a,(b)}\}}
		&\ \sum_{s,k}\|\mathbf B_k^s\mathbf t^{s,(b)}-\boldsymbol{\omega}_k^{s,(b)}\|_2^2\\
		\text{s.t.}\ & \eqref{eq:powerbudget_digital},\eqref{eq:ga_def_digital},\eqref{eq:ifft_digital},\eqref{eq:clipping_constraint_digital},\eqref{eq:mask_constraint},
	\end{aligned}
\end{equation}
handled by the four-block ADMM in Section~\ref{subsec:admm_blocks_revised_u}.

\subsection{A Four-Block ADMM Reformulation}
\label{subsec:admm_blocks_revised_u}

For fixed $\{\mathbf U_k^s\}$, the transmit-side subproblem is solved independently for each $b$, yielding $\mathbf t^{s,(b)}$; the $B$ instances may be solved in parallel. In what follows, we drop the superscript $(b)$ and present the solution for a single realization. Since $\mathbf g^a[s]=\mathbf t^s[a]$ for every $a\in\{1,\ldots,N_t\}$ and $s\in\mathcal S$, the inner problem may be expressed entirely through $\{\mathbf t^s\}$, $\{\mathbf g^a\}$, and $\{\mathbf x^a\}$. To handle the spectral-mask constraint, define
$\mathbf q^a \triangleq \mathbf A_n \mathbf g^a \in \mathbb C^{G}$ for $a=1,\ldots,N_t$. Hence, the inner problem is written in terms of
$\{\mathbf t^s\}$, $\{\mathbf g^a\}$, $\{\mathbf x^a\}$, and $\{\mathbf q^a\}$. Once $\{\mathbf t^s\}$ is available, a compatible $\{\mathbf V_k^{s,(b)}\}$ is reconstructed afterward; see Proposition~\ref{prop:recover_V_from_t}. We also introduce two quadratic regularization
terms with parameters $\eta_g>0$ and $\eta_t>0$. The stacked variables are
$\mathbf T \triangleq [\mathbf t^0 \cdots \mathbf t^{S-1}]\in\mathbb C^{N_t\times S}$,
$\mathbf W \triangleq [\mathbf g^1\cdots \mathbf g^{N_t}]^T\in\mathbb C^{N_t\times S}$,
$\mathbf X \triangleq [\mathbf x^1 \cdots \mathbf x^{N_t}]^T\in\mathbb C^{N_t\times \ell S}$, and
$\mathbf Q \triangleq [\mathbf q^1 \cdots \mathbf q^{N_t}]^T\in\mathbb C^{N_t\times G}$. By construction, $\mathbf{W}[a,:] = \mathbf{T}[a,:]$ for all $a$. Since $\mathbf{W}$ is arranged row-wise, we introduce the operators $\mathfrak{F}(\mathbf{W}) \triangleq \mathbf{W}(\mathbf{F}_{\ell S}^{H})^{T} \in \mathbb{C}^{N_t \times \ell S}$ and $\mathfrak{A}(\mathbf{W}) \triangleq \mathbf{W}\mathbf{A}_n^{T} \in \mathbb{C}^{N_t \times G}$.
	
	Next, define the closed convex sets
		\begin{align}
			\mathcal M &\triangleq \Big\{\mathbf Q\in\mathbb C^{N_t\times G}: |\mathbf Q[a,j]|^2\le \mathbf r[j],\ \forall a,j\Big\}, \nonumber\\
			\mathcal C &\triangleq \Big\{\mathbf X\in\mathbb C^{N_t\times \ell S}: |\mathbf X[a,n]|\le \chi,\ \forall a,n\Big\}, \nonumber\\
			\mathcal P &\triangleq \Big\{\mathbf T\in\mathbb C^{N_t\times S}: \|\mathbf T[:,s]\|_2^2\le P^s,\ \forall s\in\mathcal S\Big\}.\nonumber
		\end{align}
	For any closed convex set $\mathcal Z$, let $\delta_{\mathcal Z}(\mathbf Z)$ denote its indicator function, equal to $0$ if $\mathbf Z\in\mathcal Z$ and $+\infty$ otherwise.
With $\{\mathbf U_k^s\}$ fixed, the regularized inner problem for the current realization $\boldsymbol{\omega}$ becomes
	\begin{align}
		\min_{\mathbf Q,\mathbf X,\mathbf W,\mathbf T}&\quad
		\theta_1(\mathbf Q)+\theta_2(\mathbf X)+\theta_3(\mathbf W)+\theta_4(\mathbf T) \label{opt:inner4block}\\
		&\text{s.t.}\quad
		\mathbf Q-\mathfrak A(\mathbf W)=\mathbf 0,
		\mathbf X-\mathfrak F(\mathbf W)=\mathbf 0,
		\mathbf W-\mathbf T=\mathbf 0,\nonumber
	\end{align}
where $\theta_1(\mathbf Q)\triangleq \delta_{\mathcal M}(\mathbf Q)$, $\theta_2(\mathbf X)\triangleq \delta_{\mathcal C}(\mathbf X)$, $\theta_3(\mathbf W)\triangleq \frac{\eta_g}{2}\|\mathbf W\|_F^2$, and $\theta_4(\mathbf T)\triangleq\sum_{s\in\mathcal S}\sum_{k=1}^K \|\mathbf B_k^s\mathbf T[:,s]-\boldsymbol\omega_k^s\|_2^2+\delta_{\mathcal P}(\mathbf T)+\frac{\eta_t}{2}\|\mathbf T\|_F^2$.
	
	Let $\langle \mathbf A,\mathbf B\rangle \triangleq \Re\{\tr(\mathbf A^H\mathbf B)\}$. Introducing the dual matrices
	$\boldsymbol\Lambda_q\in\mathbb C^{N_t\times G}$, $\boldsymbol\Lambda_x\in\mathbb C^{N_t\times \ell S}$, and
	$\boldsymbol\Lambda_g\in\mathbb C^{N_t\times S}$, the augmented Lagrangian is
		\begin{align}
			&\mathcal L(\mathbf Q,\mathbf X,\mathbf W,\mathbf T,\boldsymbol\Lambda_q,\boldsymbol\Lambda_x,\boldsymbol\Lambda_g)=
			\theta_1(\mathbf Q)+\theta_2(\mathbf X)+\theta_3(\mathbf W) + 
			\nonumber\\
			&\hspace{-.2cm}\theta_4(\mathbf T)
			+\langle \boldsymbol\Lambda_q,\mathbf Q-\mathfrak A(\mathbf W)\rangle
			+\rho/2\|\mathbf Q-\mathfrak A(\mathbf W)\|_F^2 +\langle \boldsymbol\Lambda_x,\mathbf X-\mathfrak F(\mathbf W)\rangle
			\nonumber\\
			&
			+\frac{\rho}{2}\|\mathbf X-\mathfrak F(\mathbf W)\|_F^2
			+\langle \boldsymbol\Lambda_g,\mathbf W-\mathbf T\rangle
			+\frac{\rho}{2}\|\mathbf W-\mathbf T\|_F^2 .
			\label{eq:lagran_4block}
		\end{align}
	Accordingly, at iteration $\tau+1$, the cyclic ADMM steps are
	\begin{equation}\label{eq:admm_4block_updates}
		\begin{split}
			&\mathbf Q^{\tau+1}
			=
			\arg\min_{\mathbf Q}\ \mathcal L\!\left(
			\mathbf Q,\mathbf X^\tau,\mathbf W^\tau,\mathbf T^\tau,
			\boldsymbol\Lambda_q^\tau,\boldsymbol\Lambda_x^\tau,\boldsymbol\Lambda_g^\tau
			\right),\\
			&\mathbf X^{\tau+1}
			=
			\arg\min_{\mathbf X}\ \mathcal L\!\left(
			\mathbf Q^{\tau+1},\mathbf X,\mathbf W^\tau,\mathbf T^\tau,
			\boldsymbol\Lambda_q^\tau,\boldsymbol\Lambda_x^\tau,\boldsymbol\Lambda_g^\tau
			\right),\\
			&\mathbf W^{\tau+1}
			=
			\arg\min_{\mathbf W}\ \mathcal L\!\left(
			\mathbf Q^{\tau+1},\mathbf X^{\tau+1},\mathbf W,\mathbf T^\tau,
			\boldsymbol\Lambda_q^\tau,\boldsymbol\Lambda_x^\tau,\boldsymbol\Lambda_g^\tau
			\right),\\
			&\mathbf T^{\tau+1}
			=
			\arg\min_{\mathbf T}\ \mathcal L\!\left(
			\mathbf Q^{\tau+1},\mathbf X^{\tau+1},\mathbf W^{\tau+1},\mathbf T,
			\boldsymbol\Lambda_q^\tau,\boldsymbol\Lambda_x^\tau,\boldsymbol\Lambda_g^\tau
			\right),\\
			&\boldsymbol\Lambda_q^{\tau+1}
			=
			\boldsymbol\Lambda_q^\tau
			+\rho\big(\mathbf Q^{\tau+1}-\mathfrak A(\mathbf W^{\tau+1})\big),\\
			&\boldsymbol\Lambda_x^{\tau+1}
			=
			\boldsymbol\Lambda_x^\tau
			+\rho\big(\mathbf X^{\tau+1}-\mathfrak F(\mathbf W^{\tau+1})\big),\\
			&\boldsymbol\Lambda_g^{\tau+1}
			=
			\boldsymbol\Lambda_g^\tau
			+\rho\big(\mathbf W^{\tau+1}-\mathbf T^{\tau+1}\big).
		\end{split}
		\raisetag{-8pt}
	\end{equation}
	\subsubsection{Update of $\mathbf Q$}
	\label{subsec:q_solution_u}
	
	The $\mathbf Q$-subproblem reduces to Euclidean projection onto $\mathcal M$ as 
	\begin{align}
\mathbf Q^{\tau+1}
=
\operatorname{Proj}_{\mathcal M}\!\left(\mathfrak A(\mathbf W^\tau)-\frac{1}{\rho}\boldsymbol\Lambda_q^\tau\right).\nonumber
	\end{align}
	Equivalently, for every antenna $a$ and frequency sample $j$,
	\begin{align}
	&\mathbf q^{a,\tau+1}[j]
	=
	\min\!\Big(
	1,\frac{\sqrt{\mathbf r[j]}}{\left|\mathbf A_n[j,:]\mathbf g^{a,\tau}-\boldsymbol\Lambda_q^\tau[a,j]/\rho\right|}
	\Big)
	\nonumber\\
	&\qquad\times
	\left(\mathbf A_n[j,:]\mathbf g^{a,\tau}-\boldsymbol\Lambda_q^\tau[a,j]/\rho\right), j=1,\ldots,G,
	\label{eq:q_closed_u}
\end{align}
	with the convention that $\mathbf q^{a,\tau+1}[j]=0$ whenever the term inside parentheses is zero.
	
	\subsubsection{Update of $\mathbf X$}
	\label{subsec:x_solution_u_new}
	
	The $\mathbf X$-subproblem is likewise a Euclidean projection, now onto $\mathcal C$:
	\[
\mathbf X^{\tau+1}
=
\operatorname{Proj}_{\mathcal C}\!\left(\mathfrak F(\mathbf W^\tau)-\frac{1}{\rho}\boldsymbol\Lambda_x^\tau\right).
\]
	In elementwise form, for each antenna $a$ and sample index $n=0,\ldots,\ell S-1$,
	\begin{align}
	\mathbf x^{a,\tau+1}[n]&
	=
	\min\!\Big(
	1,\frac{\chi}{\left|\mathbf F_{\ell S}^{H}[n,:]\mathbf g^{a,\tau}-\boldsymbol\Lambda_x^\tau[a,n]/\rho\right|}
	\Big)
	\nonumber\\
	&\qquad\times
	\left(\mathbf F_{\ell S}^{H}[n,:]\mathbf g^{a,\tau}-\boldsymbol\Lambda_x^\tau[a,n]/\rho\right).
	\label{eq:x_closed_u_new}
\end{align}
	
	\subsubsection{Update of $\mathbf W$}
	\label{subsec:g_solution_u_new}
	
	The minimization with respect to $\mathbf W$ is unconstrained, strongly convex, and separable across antennas. For each
	$a$, the first-order optimality condition yields
	\begin{align}
		&\hspace{-.3cm}\Big((\eta_g+\rho)\mathbf I_S+\rho\mathbf A_n^H\mathbf A_n+\rho\mathbf F_{\ell S}\mathbf F_{\ell S}^{H}\Big)\mathbf g^{a,\tau+1}
		\nonumber\\
		&\hspace{-.3cm}=
		\mathbf A_n^H\!\left(\rho\mathbf q^{a,\tau+1}+(\boldsymbol\Lambda_q^\tau[a,:])^T\right)
		+\mathbf F_{\ell S}\!\left(\rho\mathbf x^{a,\tau+1}+(\boldsymbol\Lambda_x^\tau[a,:])^T\right)
		\nonumber\\
		&\hspace{-.3cm}+\rho(\mathbf T^\tau[a,:])^T-(\boldsymbol\Lambda_g^\tau[a,:])^T.
		\label{eq:g_normal_per_antenna_raw}
	\end{align}
	Using $\mathbf F_{\ell S}\mathbf F_{\ell S}^{H}=\mathbf I_S$, this simplifies to
	\begin{align}
	&\mathbf M_g\,\mathbf g^{a,\tau+1}
	=
	\mathbf A_n^H\!\left(\rho\mathbf q^{a,\tau+1}+(\boldsymbol\Lambda_q^\tau[a,:])^T\right)
	\label{eq:g_normal_per_antenna}\\
	&+\mathbf F_{\ell S}\!\big(\rho\mathbf x^{a,\tau+1}+(\boldsymbol\Lambda_x^\tau[a,:])^T\big)+\rho(\mathbf T^\tau[a,:])^T-(\boldsymbol\Lambda_g^\tau[a,:])^T,
	\nonumber
\end{align}
	where $\mathbf M_g \triangleq (\eta_g+2\rho)\mathbf I_S+\rho\mathbf A_n^H\mathbf A_n
	\in\mathbb C^{S\times S}.$
	
	In practice, one solves \eqref{eq:g_normal_per_antenna} as a linear system instead of explicitly forming $\mathbf M_g^{-1}$. When
	$\rho$ remains fixed, $\mathbf M_g$ is identical for all antennas and across all ADMM iterations, so a single
	factorization can be cached and reused. In addition, if $G\ll S$, the Woodbury matrix identity gives
		\begin{align}
			&\mathbf M_g^{-1}
			=
			\alpha^{-1}\mathbf I_S
			-\frac{\rho}{\alpha^2}\mathbf A_n^H
			\left(\mathbf I_G+\frac{\rho}{\alpha}\mathbf A_n\mathbf A_n^H\right)^{-1}
			\mathbf A_n,
			\alpha\triangleq \eta_g+2\rho,\nonumber
		\end{align}
	which reduces the computation to solving a $G\times G$ system. Otherwise, one may employ a cached Cholesky factorization of
	$\mathbf M_g$ or a preconditioned conjugate-gradient solve.
	
	\subsubsection{Update of $\mathbf T$}
	\label{subsec:t_solution_u_new}
	
	The $\mathbf T$-subproblem is strongly convex and separates over $s\in\mathcal S$. For each $s\in\mathcal S$,
	\begin{align}
	&\min_{\mathbf t^s}\quad
	\sum_{k=1}^{K}\|\mathbf B_k^s\mathbf t^s-\boldsymbol\omega_k^s\|_2^2
	+\frac{\eta_t}{2}\|\mathbf t^s\|_2^2
	\nonumber\\
	&\qquad+\Re\!\left\{(\boldsymbol\Lambda_g^\tau[:,s])^H(\mathbf W^{\tau+1}[:,s]-\mathbf t^s)\right\}
	\nonumber\\
	&\qquad
	+\frac{\rho}{2}\|\mathbf W^{\tau+1}[:,s]-\mathbf t^s\|_2^2
	\qquad
	\text{s.t.}\quad
	\|\mathbf t^s\|_2^2\le P^s .
	\label{opt:t_sub_u_new}
\end{align}
	Whenever the minimizer of the objective function satisfies $\|\mathbf t^s\|_2^2\le P^s$, it is also the optimizer of
	\eqref{opt:t_sub_u_new}. Otherwise, the power constraint is active. Let $\mu^s\ge 0$ denote the Lagrange multiplier and set
	\[
\mathbf M_t^s(\mu^s)\triangleq
2\sum_{k=1}^{K}(\mathbf B_k^s)^H\mathbf B_k^s+(\eta_t+\rho+2\mu^s)\mathbf I_{N_t}
\in\mathbb C^{N_t\times N_t}.
\]
	Then, $\mathbf t^s(\mu^s)
			=
			\big(\mathbf M_t^s(\mu^s)\big)^{-1}
			\big(
			2\sum_{k=1}^{K}(\mathbf B_k^s)^H\boldsymbol\omega_k^s
			+\rho\mathbf W^{\tau+1}[:,s]
			+\boldsymbol\Lambda_g^\tau[:,s]
			\big),$
	with $\mu^s>0$ such that $\|\mathbf t^s(\mu^s)\|_2^2=P^s$.
	
	The updates for $\mathbf Q$, $\mathbf X$, and $\mathbf W$ are separable across antennas, whereas the update for $\mathbf T$ separates across
	subcarriers.

\begin{proposition}
	\label{prop:recover_V_from_t}
	For each $s\in\mathcal S$, let $\bar{\boldsymbol\omega}^{\,s}\triangleq\big[(\boldsymbol\omega_1^s)^T~\cdots~(\boldsymbol\omega_K^s)^T\big]^T$. A feasible recovery is any $\{\mathbf V_k^s\}_{k=1}^K$ with $\sum_{k=1}^K \mathbf V_k^s[a,:]\boldsymbol\omega_k^s=\mathbf t^s[a]$, $\forall a$. If $\|\bar{\boldsymbol\omega}^{\,s}\|_2^2>0$, the minimum-Frobenius-norm solution is
	\[
	\mathbf V_k^s[a,:]
	=
	\frac{\mathbf t^s[a]}{\sum_{j=1}^K\|\boldsymbol\omega_j^s\|_2^2}\,(\boldsymbol\omega_k^s)^H,
	\qquad \forall a,\ k.
	\]
\end{proposition}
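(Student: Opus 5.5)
We fix $s$ and observe that the constraint $\sum_{k=1}^K \mathbf V_k^s[a,:]\boldsymbol\omega_k^s=\mathbf t^s[a]$ decouples across antennas $a$. Stacking the users' rows into $\mathbf v_a\triangleq\big[\mathbf V_1^s[a,:]~\cdots~\mathbf V_K^s[a,:]\big]\in\mathbb C^{1\times \sum_k n_k}$, the constraint for antenna $a$ becomes the single linear equation $\mathbf v_a\bar{\boldsymbol\omega}^{\,s}=\mathbf t^s[a]$. The Frobenius norm satisfies $\sum_k\|\mathbf V_k^s\|_F^2=\sum_a\|\mathbf v_a\|_2^2$, so minimizing the total norm subject to all $N_t$ constraints splits into $N_t$ independent problems, one per $a$. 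Feasibility in the first sentence of the statement is immediate: any such $\{\mathbf V_k^s\}$ reproduces $\mathbf t^s=\sum_k\mathbf V_k^s\boldsymbol\omega_k^s$. Consequently \eqref{eq:powerbudget_digital}, \eqref{eq:ga_def_digital}, \eqref{eq:ifft_digital}, \eqref{eq:clipping_constraint_digital}, and \eqref{eq:mask_constraint}, which depend on the precoders only through $\mathbf t^{s}$, remain satisfied.

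For each $a$, the problem is $\min\|\mathbf v_a\|_2^2$ subject to $\mathbf v_a\bar{\boldsymbol\omega}^{\,s}=\mathbf t^s[a]$. This is a minimum-norm solution of an underdetermined system with a single row and a nonzero coefficient vector $\bar{\boldsymbol\omega}^{\,s}$. The solution is the pseudoinverse one,
\[
\mathbf v_a=\mathbf t^s[a]\,(\bar{\boldsymbol\omega}^{\,s})^H/\|\bar{\boldsymbol\omega}^{\,s}\|_2^2 .
\]
To justify optimality directly, decompose any feasible $\mathbf v_a$ as this particular solution plus a vector $\mathbf z$ with $\mathbf z\bar{\boldsymbol\omega}^{\,s}=0$. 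The particular solution is proportional to $(\bar{\boldsymbol\omega}^{\,s})^H$, and $\mathbf z\bar{\boldsymbol\omega}^{\,s}=0$ means $\mathbf z$ is orthogonal to it. The cross term therefore vanishes and $\|\mathbf v_a\|^2=\|\mathbf v_a^\star\|^2+\|\mathbf z\|^2$, which gives both minimality and uniqueness.

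Finally, we unstack $\mathbf v_a$ back into the blocks $\mathbf V_k^s[a,:]=\mathbf t^s[a](\boldsymbol\omega_k^s)^H/\sum_j\|\boldsymbol\omega_j^s\|_2^2$, using $\|\bar{\boldsymbol\omega}^{\,s}\|_2^2=\sum_j\|\boldsymbol\omega_j^s\|_2^2>0$. As a check, substituting back gives $\sum_k \mathbf V_k^s[a,:]\boldsymbol\omega_k^s=\mathbf t^s[a]$. The main subtlety is bookkeeping rather than analysis: we must confirm that the objective is the joint Frobenius norm over all $k$, so that the problem couples users through one equation per antenna. The positivity assumption is needed only to avoid division by zero.
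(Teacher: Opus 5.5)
Your proposal is correct and follows essentially the same route as the paper: you stack each antenna's rows across users, use $\sum_k\|\mathbf V_k^s\|_F^2=\sum_a\|\mathbf v_a\|_2^2$ to decouple into per-antenna single-equation minimum-norm problems with solution $\mathbf t^s[a](\bar{\boldsymbol\omega}^{\,s})^H/\|\bar{\boldsymbol\omega}^{\,s}\|_2^2$, and then unstack. Your orthogonal-decomposition argument, which also gives uniqueness, just makes explicit the minimum-norm step that the paper takes as standard.
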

	
	\begin{proof}
		Fix $s\in\mathcal S$ and define $\bar{\mathbf v}^{\,s}[a,:]\triangleq [\mathbf V_1^s[a,:]\ \cdots\ \mathbf V_K^s[a,:]]$. Then,
		$\bar{\mathbf v}^{\,s}[a,:]\bar{\boldsymbol\omega}^{\,s}=\sum_{k=1}^K \mathbf V_k^s[a,:]\boldsymbol\omega_k^s=\mathbf t^s[a]$, so the
		recovery problem separates row by row. Moreover,
		$\sum\nolimits_{k=1}^K\|\mathbf V_k^s\|_F^2=\sum\nolimits_{a=1}^{N_t}\|\bar{\mathbf v}^{\,s}[a,:]\|_2^2$, and therefore the minimum-Frobenius-norm recovery is
		obtained by minimizing each row norm under the constraint $\bar{\mathbf v}^{\,s}[a,:]\bar{\boldsymbol\omega}^{\,s}=\mathbf t^s[a]$. If
		$\|\bar{\boldsymbol\omega}^{\,s}\|_2^2>0$, the minimum-norm solution is
		$\bar{\mathbf v}^{\,s}[a,:]=\mathbf t^s[a](\bar{\boldsymbol\omega}^{\,s})^H/\|\bar{\boldsymbol\omega}^{\,s}\|_2^2$, which gives
		$\mathbf V_k^s[a,:]=\mathbf t^s[a](\boldsymbol\omega_k^s)^H/\sum_{j=1}^K\|\boldsymbol\omega_j^s\|_2^2$.
	\end{proof}
	
\begin{proposition}
	\label{prop:admm_convergence}
	Assume $\eta_g,\eta_t>0$.
	Let $\mathcal R(\mathbf Z)\triangleq
	\begin{bmatrix}\Re\{\operatorname{vec}(\mathbf Z)\}\\
		\Im\{\operatorname{vec}(\mathbf Z)\}\end{bmatrix}$,
	and define $\boldsymbol{\beta}_1=\mathcal R(\mathbf Q)$,
	$\boldsymbol{\beta}_2=\mathcal R(\mathbf X)$,
	$\boldsymbol{\beta}_3=\mathcal R(\mathbf W)$, and
	$\boldsymbol{\beta}_4=\mathcal R(\mathbf T)$.
	Let $\mathbf A_{\mathcal R}$ and $\mathbf F_{\mathcal R}$ be the unique
	real matrices satisfying
	$\mathcal R(\mathfrak A(\mathbf W))=\mathbf A_{\mathcal R}\boldsymbol{\beta}_3$
	and
	$\mathcal R(\mathfrak F(\mathbf W))=\mathbf F_{\mathcal R}\boldsymbol{\beta}_3$.
	If $\rho$ satisfies \cite[Eq.~(3.37)]{tao2018convergence} for the
	real-valued problem
		\[
		\begin{aligned}
		\hspace{-.2cm}	\min\quad
			&\hspace{-.2cm}\theta_1(\mathcal R^{-1}(\boldsymbol{\beta}_1))
			+\theta_2(\mathcal R^{-1}(\boldsymbol{\beta}_2))
			+\theta_3(\mathcal R^{-1}(\boldsymbol{\beta}_3))
			+\theta_4(\mathcal R^{-1}(\boldsymbol{\beta}_4))\\
			\mathrm{s.t.}\quad
			&\left[\begin{smallmatrix}
				\mathbf I_{2N_tG} & \mathbf 0 & -\mathbf A_{\mathcal R} & \mathbf 0\\
				\mathbf 0 & \mathbf I_{2N_t\ell S} & -\mathbf F_{\mathcal R} & \mathbf 0\\
				\mathbf 0 & \mathbf 0 & \mathbf I_{2N_tS} & -\mathbf I_{2N_tS}
			\end{smallmatrix}\right]
			\left[\begin{smallmatrix}
				\boldsymbol{\beta}_1\\\boldsymbol{\beta}_2\\\boldsymbol{\beta}_3\\\boldsymbol{\beta}_4
			\end{smallmatrix}\right]=\mathbf 0,
		\end{aligned}
		\]
	then the ADMM iterates~\eqref{eq:admm_4block_updates} converge to a
	primal--dual solution of~\eqref{opt:inner4block}. The primal limit
	$(\mathbf Q^\star,\mathbf X^\star,\mathbf W^\star,\mathbf T^\star)$
	is globally optimal and satisfies the KKT conditions
	of~\eqref{opt:inner4block}.
	
	Moreover, if $\{(\eta_g^\nu,\eta_t^\nu)\}_\nu$ is any sequence
	with $(\eta_g^\nu,\eta_t^\nu)\to(0,0)$ and
	$(\eta_g^\nu,\eta_t^\nu)>\mathbf 0$, and
	$(\mathbf Q^\nu,\mathbf X^\nu,\mathbf W^\nu,\mathbf T^\nu)$ is a
	global optimum of~\eqref{opt:inner4block} with parameters
	$(\eta_g^\nu,\eta_t^\nu)$, then every accumulation point of
	$\{(\mathbf Q^\nu,\mathbf X^\nu,\mathbf W^\nu,\mathbf T^\nu)\}_\nu$
	is globally optimal for~\eqref{opt:inner4block} in the limit $(\eta_g,\eta_t)\to\mathbf 0$.
\end{proposition}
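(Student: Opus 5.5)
The plan is to move the complex problem to real coordinates and then invoke the multi-block ADMM convergence theorem of \cite{tao2018convergence}. That theorem covers problems of the form $\min\sum_i\theta_i(\boldsymbol\beta_i)$ subject to linear coupling constraints.

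First I check that the real reformulation is legitimate. The map $\mathcal R$ is an isometry between $\mathbb C^{m\times n}$ with inner product $\langle\cdot,\cdot\rangle=\Re\{\tr(\cdot^H\cdot)\}$ and $\mathbb R^{2mn}$ with the Euclidean inner product. Since $\mathfrak A$ and $\mathfrak F$ are $\mathbb C$-linear, and hence $\mathbb R$-linear, the matrices $\mathbf A_{\mathcal R}$ and $\mathbf F_{\mathcal R}$ exist and are unique. It follows that the augmented Lagrangian \eqref{eq:lagran_4block} coincides termwise with the real augmented Lagrangian of the stated real problem. Therefore the cyclic updates \eqref{eq:admm_4block_updates} are exactly the real four-block ADMM iterates.

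Next I verify the structural hypotheses of \cite{tao2018convergence}.
\begin{itemize}
\item Each $\theta_i\circ\mathcal R^{-1}$ is closed, proper and convex. For $\theta_1$, $\theta_2$ and the $\delta_{\mathcal P}$ part of $\theta_4$ this holds because $\mathcal M$, $\mathcal C$ and $\mathcal P$ are nonempty, closed and convex.
\item $\theta_3$ is $\eta_g$-strongly convex, and $\theta_4$ is $\eta_t$-strongly convex because of the regularizers.
\item The constraint matrix has identity blocks for $\boldsymbol\beta_1$, $\boldsymbol\beta_2$ and $\boldsymbol\beta_3$ in the block-triangular pattern displayed, so every subproblem is well posed. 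This matches the closed forms derived above.
\item A feasible point with a Slater-type interior exists: take $\mathbf W=\mathbf T=\mathbf 0$ and $\mathbf Q=\mathbf 0$, $\mathbf X=\mathbf 0$. These lie in the interiors of $\mathcal M$, $\mathcal C$ and $\mathcal P$, provided $\mathbf r[j]>0$, $\chi>0$ and $P^s>0$. Since all remaining constraints are linear equalities, strong duality holds and a KKT point exists.
\end{itemize}
Under the assumed condition \cite[Eq.~(3.37)]{tao2018convergence} on $\rho$ relative to $\eta_g$, $\eta_t$ and the constraint blocks, the theorem gives convergence of the iterates to a primal--dual KKT point. For a convex problem, KKT implies global optimality, which proves the first claim. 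I expect the main obstacle to be this step: matching the exact form required in \cite{tao2018convergence}, namely which blocks must be strongly convex and the ordering of the blocks. If needed, I would regroup blocks, treating $(\mathbf Q,\mathbf X)$ as one separable block since their updates decouple.

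For the second claim I use a standard epi-convergence and continuity argument. The feasible set $\mathcal F$ of \eqref{opt:inner4block} does not depend on $(\eta_g,\eta_t)$. It is closed, and it is bounded because $\mathbf T\in\mathcal P$ forces $\mathbf W=\mathbf T$ to be bounded, which in turn bounds $\mathbf Q$ and $\mathbf X$. Hence accumulation points exist and lie in $\mathcal F$.

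Let $f_0$ denote the unregularized objective and $f_\nu=f_0+\frac{\eta_g^\nu}{2}\|\mathbf W\|_F^2+\frac{\eta_t^\nu}{2}\|\mathbf T\|_F^2$. For any $\mathbf Z\in\mathcal F$, optimality of the $\nu$-th solution gives
\[
f_0(\mathbf Z^\nu)\le f_\nu(\mathbf Z^\nu)\le f_\nu(\mathbf Z)\le f_0(\mathbf Z)+c\,(\eta_g^\nu+\eta_t^\nu),
\]
where $c$ bounds the regularizers uniformly on $\mathcal F$. Passing to a convergent subsequence and using continuity of $f_0$ on $\mathcal F$ gives $f_0(\mathbf Z^\star)\le f_0(\mathbf Z)$ for every $\mathbf Z\in\mathcal F$. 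Hence every accumulation point is globally optimal for the limit problem.
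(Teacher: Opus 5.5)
Your proposal is correct and follows the paper's proof essentially step for step. You realify via the isometry $\mathcal R$ and verify the hypotheses of \cite[Thm.~3.1]{tao2018convergence}: closed proper convexity, strong convexity of $\theta_3,\theta_4$, full column rank, and a Slater point at the origin, with the needed primal attainment supplied by the compact feasible set you establish in the second part. You then handle vanishing regularization with the same optimality inequality on the compact, $\eta$-independent feasible set. One small slip: it is the $\boldsymbol\beta_4$ column $[\mathbf 0;\mathbf 0;-\mathbf I]$ that is a pure (negative) identity, while the $\boldsymbol\beta_3$ column $[-\mathbf A_{\mathcal R};-\mathbf F_{\mathcal R};\mathbf I]$ has full column rank because it contains an identity sub-block, so the rank hypothesis still holds for all four blocks.
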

	
	\begin{proof}
		Since $\mathcal R$ is a linear bijection, \eqref{opt:inner4block} and its realified counterpart are
		equivalent. We verify the hypotheses of
		\cite[Thm.~3.1]{tao2018convergence} for the realified four-block problem.
		
		\emph{Convexity.}
		$\theta_1=\delta_{\mathcal M}$ and $\theta_2=\delta_{\mathcal C}$ are closed
		proper convex as indicators of nonempty closed convex sets, while
		$\eta_g,\eta_t>0$ make $\theta_3$ and $\theta_4$ strongly convex, satisfying
		the $4-2=2$ strong-convexity requirement.
		
		\emph{Full column rank.}
		Blocks~1, 2, and~4 of the constraint matrix are identity. For block~3,
		$\mathbf A_{\mathcal R}^T\mathbf A_{\mathcal R}
		+\mathbf F_{\mathcal R}^T\mathbf F_{\mathcal R}
		+\mathbf I_{2N_tS}\succ\mathbf 0$.
		
		\emph{Relative interior feasibility.}
		The origin is equality-feasible and, since $\mathbf r[j],\chi,P^s>0$, lies in the
		interior of $\mathcal M$, $\mathcal C$, and $\mathcal P$, confirming the
		assumptions of~\cite{tao2018convergence}.
		
		\emph{Existence.}
		$\mathbf T\in\mathcal P$ is bounded; the equalities $\mathbf W=\mathbf T$,
		$\mathbf Q=\mathfrak A(\mathbf W)$, $\mathbf X=\mathfrak F(\mathbf W)$
		then bound all variables. The feasible set is nonempty and compact, and
		the objective is proper and lower semicontinuous, so a minimum exists.
		
		With $\rho$ in the admissible range,
		\cite[Thm.~3.1]{tao2018convergence} gives convergence; mapping back
		through $\mathcal R^{-1}$ proves the complex-domain claim, and convexity
		of~\eqref{opt:inner4block} yields global optimality and KKT satisfaction.
		
\emph{Vanishing regularization.}
Let $\boldsymbol{\zeta}\triangleq(\mathbf Q,\mathbf X,\mathbf W,\mathbf T)$,
$\mathcal F$ be the compact feasible set of~\eqref{opt:inner4block}, and
$F_0$ be the unregularized objective. Since $\mathcal F$ is compact, the
sequence $\{\boldsymbol{\zeta}^\nu\}$ of regularized optima has an accumulation
point $\bar{\boldsymbol{\zeta}}\in\mathcal F$ along some subsequence $\nu'$.
Optimality of $\boldsymbol{\zeta}^{\nu'}$ for the regularized problem yields
$F_0(\boldsymbol{\zeta}^{\nu'})+R^{\nu'}(\boldsymbol{\zeta}^{\nu'})\le F_0(\boldsymbol{\zeta})+R^{\nu'}(\boldsymbol{\zeta})$
for every $\boldsymbol{\zeta}\in\mathcal F$, with
$R^{\nu'}(\boldsymbol{\zeta})\triangleq\tfrac{\eta_g^{\nu'}}{2}\|\mathbf W\|_F^2
+\tfrac{\eta_t^{\nu'}}{2}\|\mathbf T\|_F^2$. Boundedness of $\mathcal F$
implies $R^{\nu'}(\boldsymbol{\zeta}^{\nu'})\to 0$ and $R^{\nu'}(\boldsymbol{\zeta})\to 0$,
so passing to the limit and invoking lower semicontinuity of $F_0$ gives $
	F_0(\bar{\boldsymbol{\zeta}})
	\le \liminf_{\nu'}F_0(\boldsymbol{\zeta}^{\nu'})
	\le \limsup_{\nu'}F_0(\boldsymbol{\zeta}^{\nu'})
	\le F_0(\boldsymbol{\zeta}),
	\forall\,\boldsymbol{\zeta}\in\mathcal F.$
Thus, $\bar{\boldsymbol{\zeta}}$ is globally optimal
in the limit $(\eta_g,\eta_t)\to\mathbf 0$.
	\end{proof}
	
\subsection{The Subproblem with Respect to $\mathbf U_k^s$}
For fixed transmit-side variables, the $\mathbf U_k^s$-subproblem is unconstrained and quadratic. The first order optimality condition yields the linear MMSE combiner
\begin{equation}
	\mathbf U_k^s=\big(\mathbf H_k^s\mathbf R_{t^st^s}(\mathbf H_k^s)^H+\sigma_{\mathrm{noise},k}^{s\,2}\mathbf I_{N_r}\big)^{-1}\mathbf H_k^s\mathbf R_{t^s\omega_k^s},
	\label{eq:U}
\end{equation}
where $\mathbf R_{t^st^s}\triangleq \frac{1}{B}\sum_{b}\mathbf t^{s,(b)}(\mathbf t^{s,(b)})^H$ and $\mathbf R_{t^s\omega_k^s}\triangleq \frac{1}{B}\sum_{b}\mathbf t^{s,(b)}(\boldsymbol{\omega}_k^{s,(b)})^H$, with each $\mathbf t^{s,(b)}$ produced by the inner ADMM of Section~\ref{subsec:admm_blocks_revised_u}. When the precoder is symbol-independent, $\mathbf R_{t^st^s}=\sum_{j=1}^{K}\mathbf V_j^s(\mathbf V_j^s)^H$ and $\mathbf R_{t^s\omega_k^s}=\mathbf V_k^s$, which recovers the conventional linear MMSE combiner. The symbol decision is $\mathcal Q(\mathbf U_k^{sH}\mathbf y_k^{s,(b)})$, with $\mathcal Q$ the componentwise QAM projection.
	
	\begin{figure*}[t!]
		\centering
		\subfloat[\label{fig:conv_bcd}]{
			\includegraphics[width=0.22\textwidth]{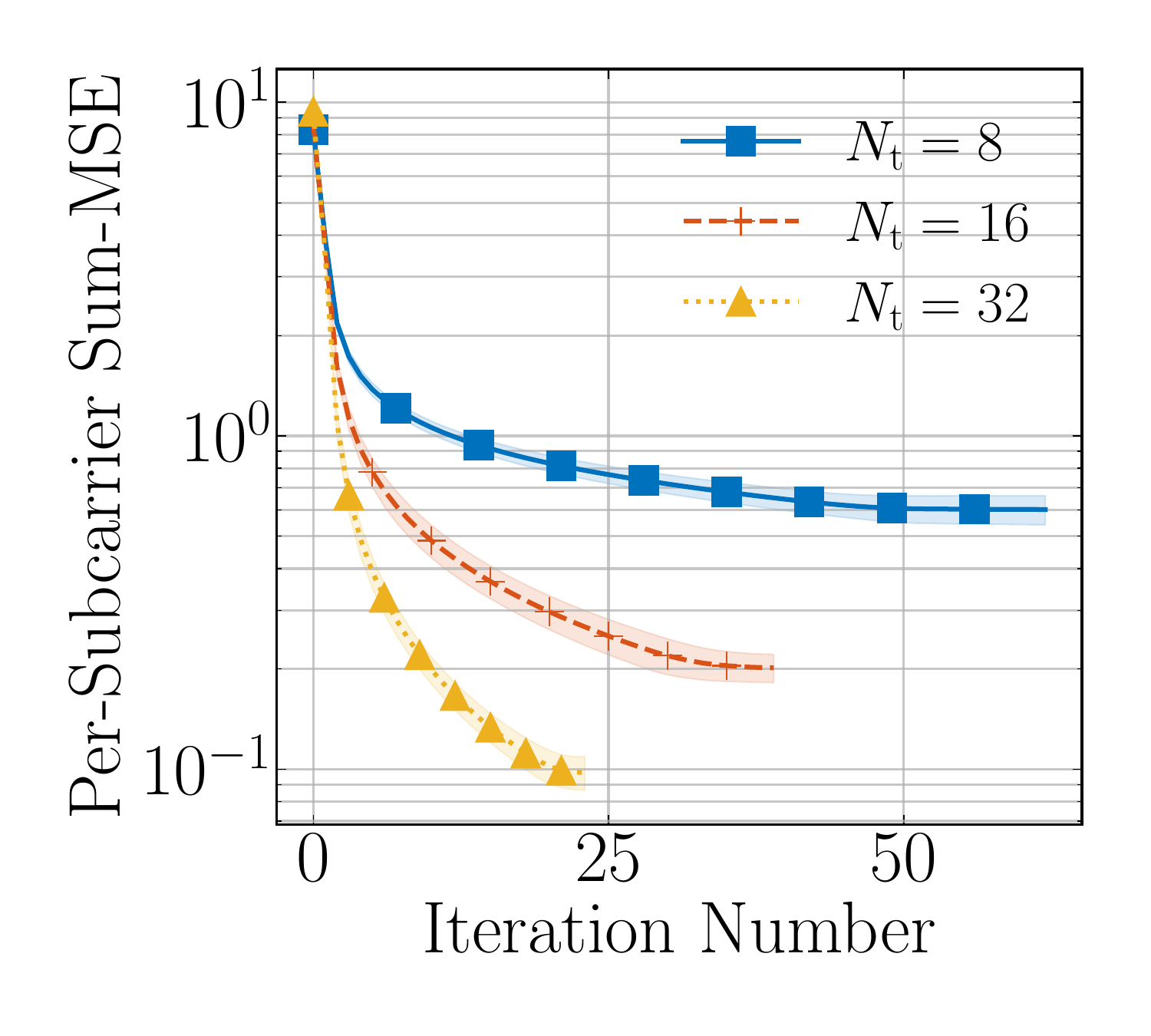}
		}\hfil
		\subfloat[\label{fig:rate_power_32}]{
			\includegraphics[width=0.22\textwidth]{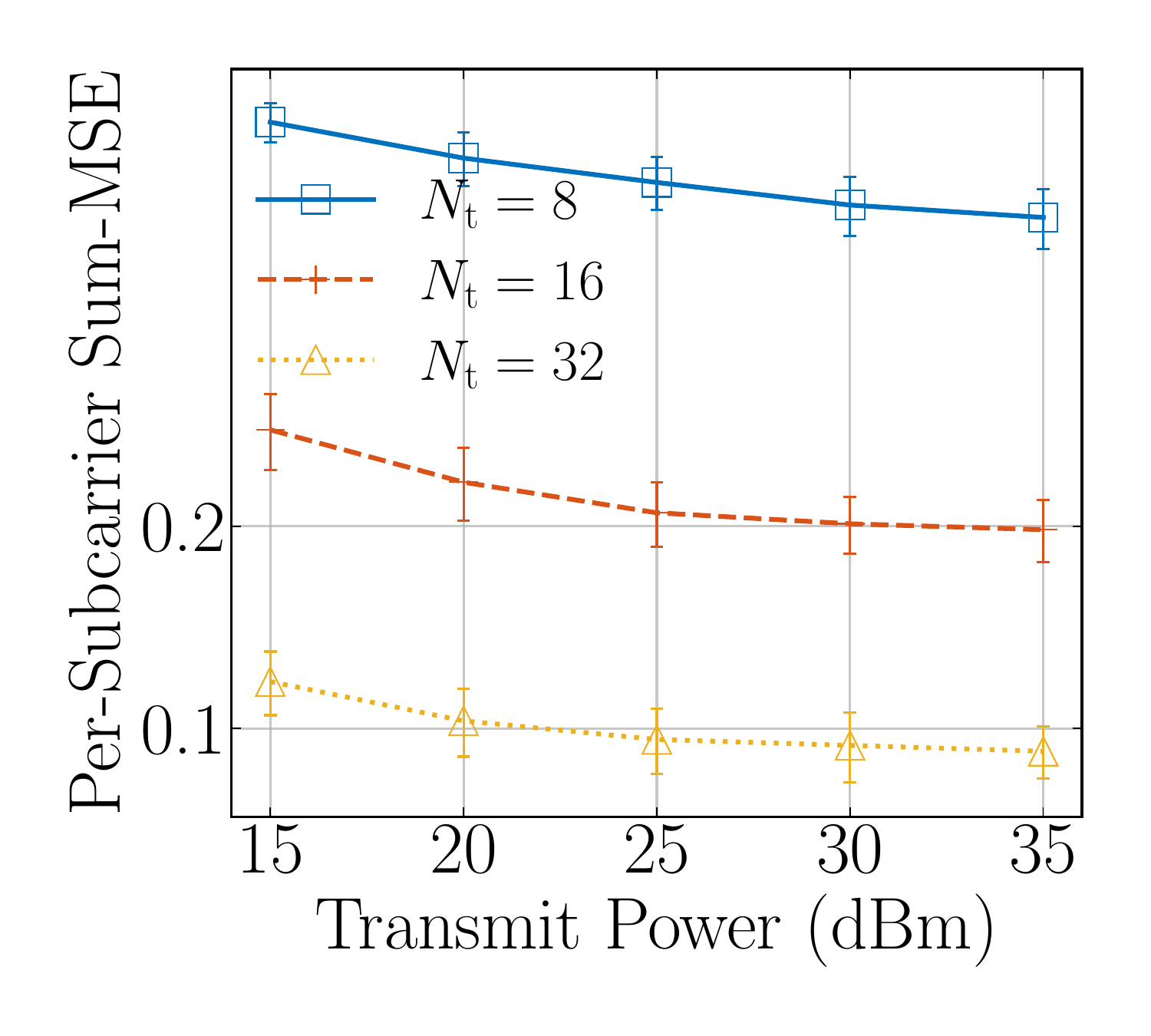}
		}\hfil
		\subfloat[\label{fig:rate_power_64}]{
			\includegraphics[width=0.22\textwidth]{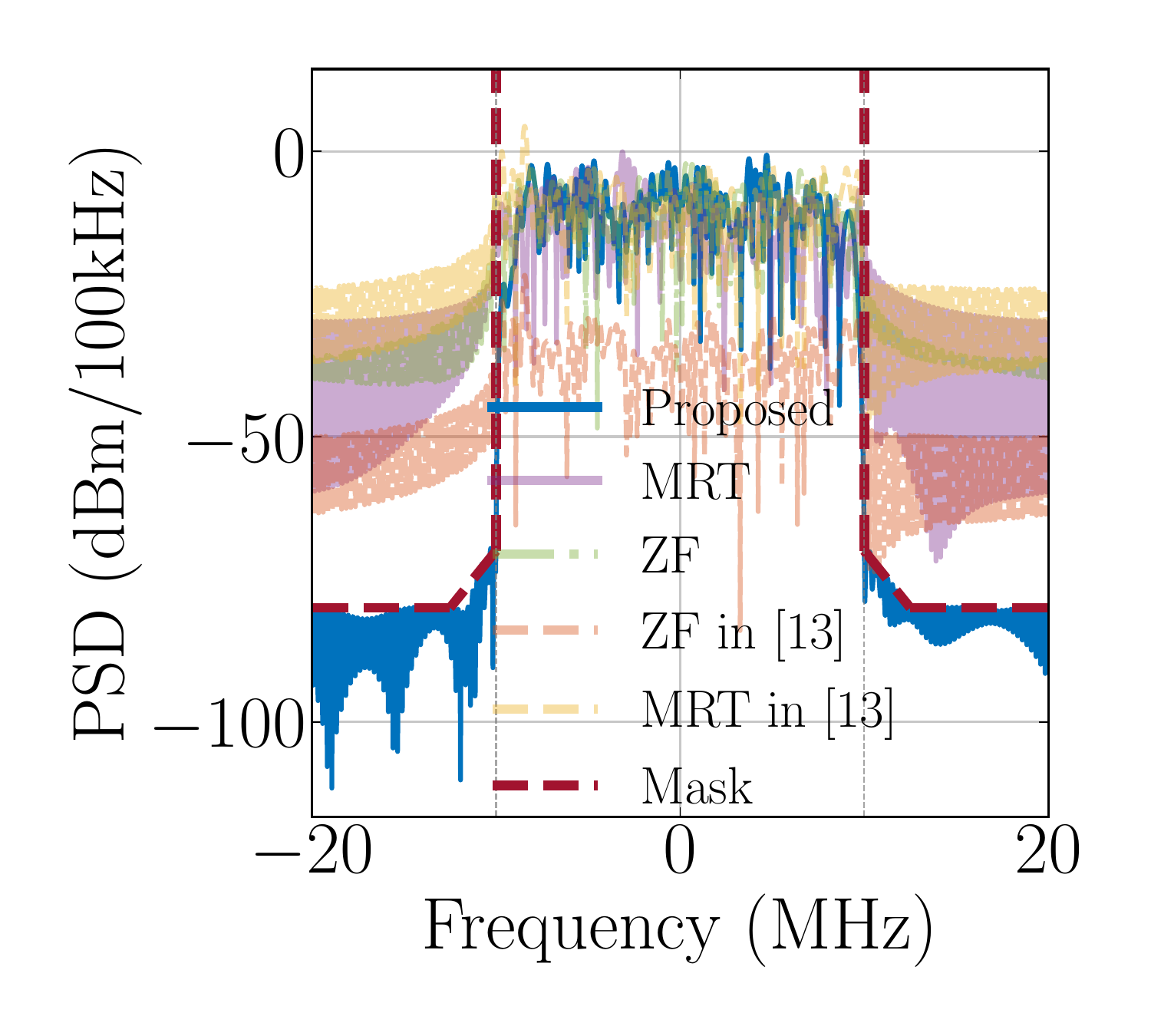}
		}
		\subfloat[\label{fig:mse_power_64}]{
			\includegraphics[width=0.22\textwidth]{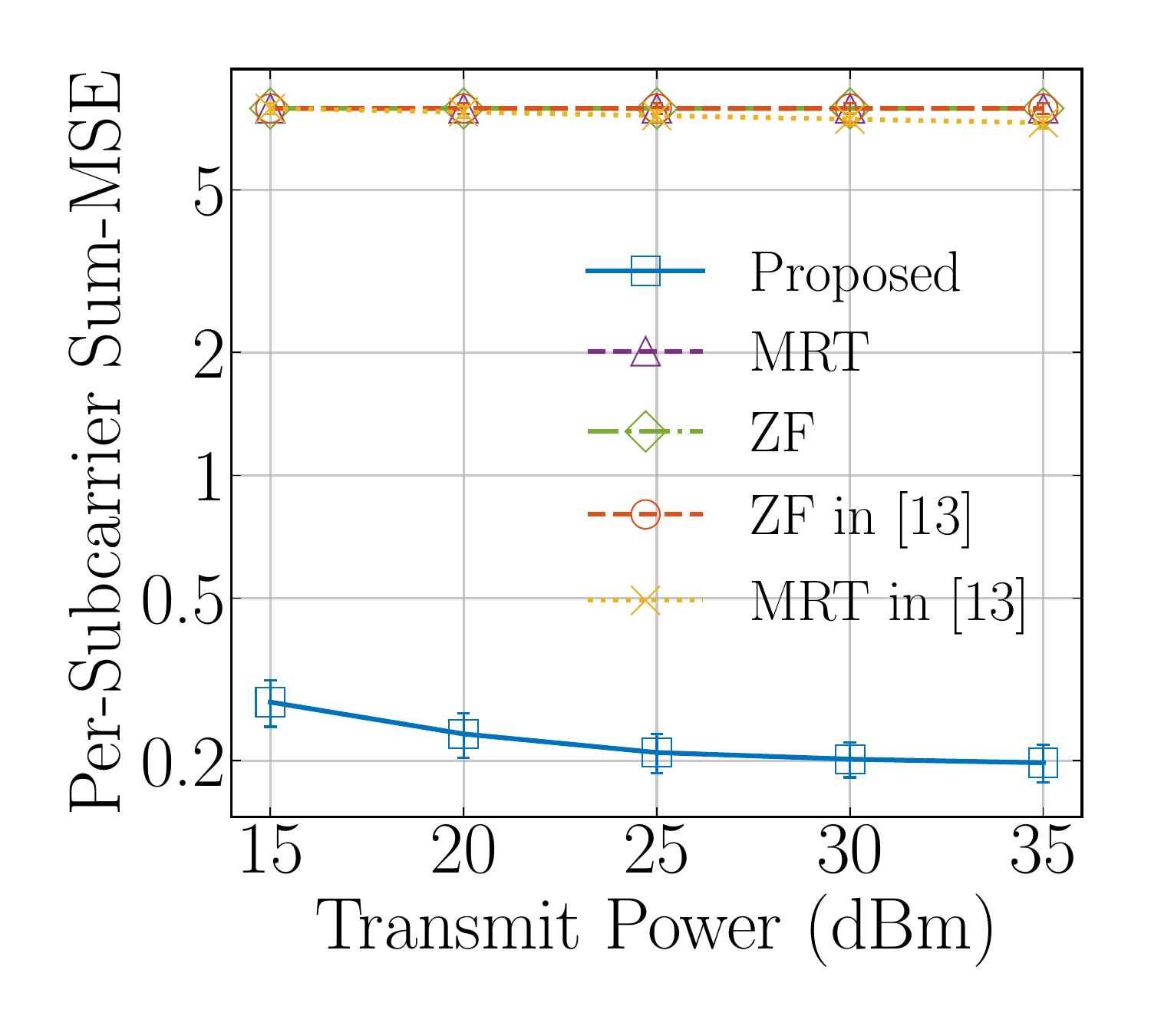}
		}
		\caption{Summary of convergence and performance metrics for the proposed scheme against benchmark methods.}
		\label{fig:summary_panels}
		
	\end{figure*}

	\subsection{The Proposed BCD-based Precoding/Combining Approach}

In the proposed method, the combiner block is updated using the symbol distribution, whereas the transmit-side block is solved pointwise for the realized symbol tuple.

\begin{proposition}
	\label{prop:bcd_stationarity}
	Assume $\eta_g,\eta_t>0$ and the hypotheses of Proposition~\ref{prop:admm_convergence} hold for every $b\in\{1,\ldots,B\}$. Let $J_{\eta_g,\eta_t}$ denote the objective of~\eqref{opt:main} augmented by the averaged per-realization regularizers $\frac{1}{B}\sum_{b=1}^{B}\left(\frac{\eta_g}{2}\|\mathbf W^{(b)}\|_F^2+\frac{\eta_t}{2}\|\mathbf T^{(b)}\|_F^2\right)$, and let $\mathcal F$ collect the stacked per-realization transmit quadruples satisfying the affine equalities and $\mathcal M,\mathcal C,\mathcal P$. For the idealized BCD alternating between~\eqref{eq:U} and the exact solution of~\eqref{opt:inner4block} per realization, indexed by $\alpha$: \emph{(i)}~$\{J_{\eta_g,\eta_t}^\alpha\}$ is non-increasing and convergent; \emph{(ii)}~every accumulation point is stationary for $\min_{\{\mathbf U_k^s\},\,\mathcal F}J_{\eta_g,\eta_t}$; \emph{(iii)}~for a separate sequence $(\eta_g^\nu,\eta_t^\nu)\to\mathbf 0$, every accumulation point of the corresponding stationary points is stationary for~\eqref{opt:main} in the limit $(\eta_g,\eta_t)\to\mathbf 0$.
\end{proposition}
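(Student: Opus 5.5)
The argument is the standard two-block BCD analysis: each block update is exact, and the problem has the structure needed for the classical results on two-block alternating minimization.

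\emph{(i) Monotonicity.} Write $J_{\eta_g,\eta_t}(\mathcal U,\mathcal Z)$, where $\mathcal U=\{\mathbf U_k^s\}$ and $\mathcal Z=\{(\mathbf Q^{(b)},\mathbf X^{(b)},\mathbf W^{(b)},\mathbf T^{(b)})\}_b\in\mathcal F$.
\begin{itemize}
\item For fixed $\mathcal Z$, the objective is a strictly convex quadratic in $\mathcal U$, since the noise term $\sigma^{s\,2}_{\mathrm{noise},k}\tr(\mathbf U_k^{sH}\mathbf U_k^s)$ is present. Hence \eqref{eq:U} is its unique global minimizer.
\item For fixed $\mathcal U$, the objective separates across $b$ into $\frac{1}{B}$ times \eqref{opt:inner4block}. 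By Proposition~\ref{prop:admm_convergence}, each such problem is solved to global optimality; in the idealized BCD it is solved exactly. Strong convexity of $\theta_4$ in $\mathbf T$, together with $\mathbf W=\mathbf T$ and the affine definitions of $\mathbf Q$ and $\mathbf X$, makes this minimizer unique.
\end{itemize}
Therefore $J^{\alpha+1}\le J^{\alpha}$. Since $J_{\eta_g,\eta_t}\ge 0$, the monotone bounded sequence $\{J^\alpha\}$ converges.

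\emph{(ii) Stationarity of accumulation points.}
\begin{itemize}
\item \emph{Boundedness.} The iterates $\mathcal Z^\alpha$ lie in the compact set $\mathcal F$, as shown in the existence step of Proposition~\ref{prop:admm_convergence}. The map \eqref{eq:U} is continuous in the sample covariances, because $\sigma^2\mathbf I\succ\mathbf 0$ keeps the inverse well defined. Hence $\mathcal U^\alpha$ is bounded.
\item \emph{Passing to the limit.} Take a convergent subsequence $(\mathcal U^{\alpha_i},\mathcal Z^{\alpha_i})\to(\bar{\mathcal U},\bar{\mathcal Z})$. By continuity of \eqref{eq:U}, the limit satisfies $\bar{\mathcal U}=\arg\min_{\mathcal U}J(\mathcal U,\bar{\mathcal Z})$.
\item \emph{The $\mathcal Z$-block.} I would use the standard two-block argument (Grippo--Sciandrone): for two-block Gauss--Seidel with exact minimization on a product of closed convex sets, every limit point is coordinatewise minimal. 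This is simple here because the two-block structure needs no uniqueness assumption. Concretely, continuity of the objective on $\mathcal F$ and $J(\mathcal U^{\alpha+1},\mathcal Z^{\alpha+1})\le J(\mathcal U^{\alpha+1},\mathcal Z)$ for all $\mathcal Z\in\mathcal F$, combined with convergence of $J^\alpha$, give that $\bar{\mathcal Z}$ minimizes $J(\bar{\mathcal U},\cdot)$ over $\mathcal F$. One needs to pass to a further subsequence so that $\mathcal U^{\alpha_i+1}$ also converges, and to use that the limits of consecutive values coincide.
\item \emph{Combining the blocks.} The objective is continuously differentiable in $(\mathcal U,\mathbf T)$ (smooth quadratics plus indicators of convex sets), and the constraint set is the product $\mathbb C^{\cdot}\times\mathcal F$. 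So blockwise optimality yields $\nabla_{\mathcal U}J=\mathbf 0$ and the variational inequality $\langle\nabla_{\mathcal Z}J,\mathcal Z-\bar{\mathcal Z}\rangle\ge 0$ for all $\mathcal Z\in\mathcal F$, which together are stationarity of the joint problem.
\end{itemize}
The main obstacle is this step: making the subsequence argument for the $\mathcal Z$-block rigorous. If the direct argument stalls, I would fall back on uniqueness of both block minimizers, which holds here, and invoke Bertsekas' Prop.~2.7.1.

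\emph{(iii) Vanishing regularization.} Let $(\mathcal U^\nu,\mathcal Z^\nu)$ be stationary for $J_{\eta_g^\nu,\eta_t^\nu}$.
\begin{itemize}
\item \emph{Boundedness.} $\mathcal Z^\nu\in\mathcal F$, which is compact and independent of $\eta$. $\mathcal U^\nu$ is given by \eqref{eq:U} from stationarity, so it is bounded uniformly in $\nu$.
\item \emph{Limit of the stationarity conditions.} Extract a convergent subsequence. The stationarity conditions are $\nabla_{\mathcal U}J=\mathbf 0$ and, for all $\mathcal Z\in\mathcal F$,
\[
\langle\nabla_{\mathcal Z}J_0+(\eta_g^\nu\mathbf W^\nu,\eta_t^\nu\mathbf T^\nu),\mathcal Z-\mathcal Z^\nu\rangle\ge 0 .
\]
The regularizer gradient is bounded on $\mathcal F$ and multiplied by $\eta^\nu\to 0$. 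By continuity of $\nabla J_0$, both conditions pass to the limit.
\end{itemize}
This shows that the limit point is stationary for \eqref{opt:main}. The argument mirrors the vanishing-regularization step in the proof of Proposition~\ref{prop:admm_convergence}, with optimality replaced by the variational inequality.
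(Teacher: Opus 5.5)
Your parts (i) and (iii) match the paper's proof. Part (i) is monotonicity from exact block minimization together with $J_{\eta_g,\eta_t}\ge 0$. Part (iii) uses compactness of the $\eta$-independent set $\mathcal F$, boundedness of $\mathcal U^\nu$ through \eqref{eq:U}, and passage of the blockwise optimality conditions to the limit. Your variational inequality, tested against a fixed $\mathcal Z\in\mathcal F$, is an elementary substitute for the paper's outer semicontinuity of $N_{\mathcal F}$.

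Part (ii) is where you take a genuinely different route. The paper uses strong convexity of \emph{both} blocks: modulus $\sigma_{\min}^2$ in $\mathbf U$, and $(\eta_g+\eta_t)/B$ in $\mathbf T^{(b)}$ after eliminating $\mathbf W^{(b)},\mathbf X^{(b)},\mathbf Q^{(b)}$. From this it gets the sufficient-decrease bounds $J^\alpha-J^{\alpha+\frac12}\ge c_U\|\Delta\mathbf U^\alpha\|^2$ and $J^{\alpha+\frac12}-J^{\alpha+1}\ge c_Z\|\Delta\mathbf T^\alpha\|^2$, and telescopes them to get vanishing successive differences. It then uses continuity of both block-minimizer maps (Berge plus uniqueness) to show every accumulation point is a fixed point of the two-block map. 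You instead invoke the Grippo--Sciandrone two-block result. It needs only continuous differentiability of $J_{\eta_g,\eta_t}$ on the product of the $\mathbf U$-space with the closed convex set $\mathcal F$, plus exact block minimization, and no uniqueness. So within the idealized BCD, your argument for (ii) does not even need $\eta_g+\eta_t>0$. The paper's route, in exchange, yields the stronger conclusion $\|\Delta\mathbf U^\alpha\|,\|\Delta\mathbf T^\alpha\|\to 0$, which justifies iterate-change stopping rules. Your Bertsekas fallback is close in spirit to the paper's uniqueness-based argument, and its hypotheses do hold here.

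One bookkeeping point: the step you call the main obstacle is shorter than you expect, but your two claims assume different update orders. Suppose $\mathbf U$ is updated first, as your inequality $J(\mathcal U^{\alpha+1},\mathcal Z^{\alpha+1})\le J(\mathcal U^{\alpha+1},\mathcal Z)$ indicates. Then optimality of $\bar{\mathcal Z}$ is immediate along $\alpha_i$ by continuity of $J$. Optimality of $\bar{\mathcal U}$, however, does \emph{not} follow from continuity of \eqref{eq:U}, because $\mathcal U^{\alpha_i}$ was computed from $\mathcal Z^{\alpha_i-1}$. It follows instead from $J(\mathcal U^{\alpha_i+1},\mathcal Z^{\alpha_i})\le J(\mathcal U,\mathcal Z^{\alpha_i})$ for all $\mathcal U$, together with $J^{\alpha_i+1}\le J(\mathcal U^{\alpha_i+1},\mathcal Z^{\alpha_i})\le J^{\alpha_i}$. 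Both outer terms converge to $J(\bar{\mathcal U},\bar{\mathcal Z})$, so no further subsequence is needed.
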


\begin{proof}
Stacking the $B$ per-realization transmit quadruples with $\{\mathbf U_k^s\}$ gives a finite-dimensional two-block problem (Block~1: $\{\mathbf U_k^s\}$, unconstrained; Block~2: $\mathcal F$). Constraints decouple across $b$, so pointwise minimization yields the exact Block-2 minimizer (Proposition~\ref{prop:admm_convergence}); \eqref{eq:U} exactly minimizes Block~1. Hence, $0\le J_{\eta_g,\eta_t}^{\alpha+1}\le J_{\eta_g,\eta_t}^\alpha$, giving~\emph{(i)}. Block~1 is strongly convex with modulus at least $\sigma_{\min}^2\triangleq\min_{k,s}\sigma_{\mathrm{noise},k}^{s\,2}>0$; Block~2, after eliminating $\mathbf W^{(b)}=\mathbf T^{(b)}$, $\mathbf X^{(b)}=\mathfrak F(\mathbf T^{(b)})$, and $\mathbf Q^{(b)}=\mathfrak A(\mathbf T^{(b)})$, reduces to a program in $\mathbf T^{(b)}$ with strong-convexity modulus at least $(\eta_g+\eta_t)/B$ and hence has a unique minimizer. There exist $c_U,c_Z>0$ such that, with $J_{\eta_g,\eta_t}^{\alpha+\frac12}$ denoting the value after the Block-1 update, $J_{\eta_g,\eta_t}^\alpha-J_{\eta_g,\eta_t}^{\alpha+\frac12}\ge c_U\|\Delta\mathbf U^\alpha\|^2$ and $J_{\eta_g,\eta_t}^{\alpha+\frac12}-J_{\eta_g,\eta_t}^{\alpha+1}\ge c_Z\|\Delta\mathbf T^\alpha\|^2$. Telescoping yields $\|\Delta\mathbf U^\alpha\|,\|\Delta\mathbf T^\alpha\|\to 0$; the equalities propagate this to $\mathbf W^{(b)},\mathbf Q^{(b)},\mathbf X^{(b)}$. Power feasibility bounds $\mathbf T^{(b),\alpha}$; the equalities and~\eqref{eq:U} bound the rest, so all iterates lie in a compact set. Each block minimizer is a continuous function of the opposing block (Block~1 by continuity of~\eqref{eq:U}; Block~2 by Berge's theorem and uniqueness from strong convexity). Vanishing steps and continuity imply that any accumulation point is a fixed point of the two-block map, hence blockwise optimal: with $\mathbf Z \triangleq \{(\mathbf Q^{(b)},\mathbf X^{(b)},\mathbf W^{(b)},\mathbf T^{(b)})\}_{b=1}^{B}$ and $\bar{\mathbf Z}$ denoting any accumulation point of $\{\mathbf Z^\alpha\}$, one has $\nabla_{\mathbf U}J_{\eta_g,\eta_t}(\bar{\mathbf U},\bar{\mathbf Z})=\mathbf 0$ and $-\nabla_{\mathbf Z}J_{\eta_g,\eta_t}(\bar{\mathbf U},\bar{\mathbf Z})\in N_{\mathcal F}(\bar{\mathbf Z})$, (where $N_{\mathcal F}(\bar{\mathbf Z})$ denotes the normal cone to the feasible set $\mathcal F$ at $\bar{\mathbf Z}$), establishing~\emph{(ii)}. For~\emph{(iii)}, $\mathbf T^{(b),\nu}\in\mathcal P$ and~\eqref{eq:U} bound $\{(\mathbf U^\nu,\mathbf Z^\nu)\}$, so accumulation points exist. Along any convergent subsequence, $\nabla_{\mathbf U}J_{\eta_g^\nu,\eta_t^\nu}(\mathbf U^\nu,\mathbf Z^\nu)=\mathbf 0$ passes to the limit by continuity. The Block-2 inclusion differs from its limiting form only by $\tfrac{\eta_g^\nu}{B}\mathbf W^{(b),\nu}$ and $\tfrac{\eta_t^\nu}{B}\mathbf T^{(b),\nu}$, which are bounded and vanish; outer semicontinuity of $N_{\mathcal F}$ passes the inclusion to the limit. Therefore, every accumulation point is stationary for the equivalent lifted problem in the limit.
\end{proof}

	\section{Simulation Results}
	\label{sec:sim}

	Numerical simulations are conducted to evaluate the proposed schemes. Four users are considered, each equipped with $N_r = 2$ antennas and supporting two spatial streams. The total bandwidth is
	$20$~MHz with carrier frequency $f^c=28$~GHz and $S=64$, and all users occupy all subcarriers. The transmitted symbols are independently drawn from a normalized 64-QAM constellation, and $B=30$.
	
	Users are placed uniformly at random within a disk of radius $4$~m, located $300$~m from the transmitter.
	A frequency-selective Rician MIMO--OFDM channel with $T$ taps is considered, where the first term corresponds to the dominant line-of-sight (LOS) component and each delayed tap corresponds to a single non-line-of-sight (NLOS) scattering cluster. For user $k$ on subcarrier $s$, the channel matrix $\mathbf{H}_k^s$ is generated as $\mathbf{H}_k^s=
	\sqrt{\frac{\kappa}{\kappa+1}}\sqrt{g_k}\,\mathbf{a}_r(\phi_k^{\text{AoA}})\mathbf{a}_t^H(\theta_k^{\text{AoD}})
	+\sum_{l=1}^{T-1}\sqrt{\frac{1}{\kappa+1}}\sqrt{g_{k,l}}\,\mathbf{a}_r(\phi_{k,l}^{\text{AoA}})\mathbf{a}_t^H(\theta_{k,l}^{\text{AoD}})\,h_{k,l}e^{-\jmath 2\pi l s/S}$ \cite{bjornson2024introduction}. Here, $h_{k,l}\sim\mathcal{CN}(0,1)$. We set $\kappa=10$ and $\chi=3$.
	
	We use the large-scale fading model in \cite{3GPP_TR_36_814_2017}. For user distance $d_k$ (m) and carrier frequency $f^c$ (GHz), the LOS path loss is
	$PL_{\text{dB}}^{\text{LOS}}=22\log_{10}(d_k)+28+20\log_{10}(f^c)+\xi_{\text{LOS}}$, where $\xi_{\text{LOS}}\sim\mathcal{N}(0,\sigma_{\text{LOS}}^2)$ with $\sigma_{\text{LOS}}=5.8$~dB, giving $g_k=10^{-PL_{\text{dB}}^{\text{LOS}}/10}$. For NLOS paths,
	$PL_{\text{dB}}^{\text{NLOS}}=22\log_{10}(d_k)+28+20\log_{10}(f^c)+\xi_{\text{NLOS}}$, where $\xi_{\text{NLOS}}\sim\mathcal{N}(0,\sigma_{\text{NLOS}}^2)$ with $\sigma_{\text{NLOS}}=8.7$~dB, and $g_{k,l}=10^{-PL_{\text{dB}}^{\text{NLOS}}/10}$. The NLOS angles are generated by Gaussian perturbations around the LOS directions:
	$\theta_{k,l}^{\text{AoD}}=\theta_k^{\text{AoD}}+\Delta\theta_{k,l}$ and $\phi_{k,l}^{\text{AoA}}=\phi_k^{\text{AoA}}+\Delta\phi_{k,l}$, where
	$\Delta\theta_{k,l},\Delta\phi_{k,l}\sim\mathcal{N}(0,\sigma_\theta^2)$. The subcarrier-$s$ noise at user $k$ is
	$\mathbf{n}_k^s\sim\mathcal{CN}\!\big(\mathbf{0},\sigma_{\text{noise},k}^{s\,2}\mathbf{I}_{N_r}\big)$, and the noise power spectral density is $-174$~dBm/Hz.
	
	We use a mask that is inactive for $|f|<10.01$~MHz, tightens linearly from $-70$ to $-80$~dBm/$100$~kHz over $|f|\in[10.01,12.5]$~MHz, and remains flat at $-80$~dBm/$100$~kHz for $|f|\ge 12.5$~MHz. We enforce the mask (and the notches over $[-18,-10.01]$~MHz and $[10.01,18]$~MHz) using $90$ uniformly spaced frequency samples per side to form $\mathbf A_n$. For $64$ subcarriers and a per-subcarrier power budget of $30$~dBm, Fig.~\ref{fig:conv_bcd} depicts the per-subcarrier convergence of the proposed BCD algorithm for different numbers of antennas. The sum-MSE decreases monotonically with the iteration index. For the $64$-subcarrier configuration, the per-subcarrier sum-MSE is shown in Fig.~\ref{fig:rate_power_32} as the per-subcarrier transmit power budget is swept from $15$~dBm to $35$~dBm in $5$~dB steps; the average per-subcarrier sum-MSE decreases with increasing transmit power and is further reduced when more antennas are employed.
	The sum-MSE decreases markedly at low power, but the rate of improvement diminishes at higher power as the spectral-mask constraints become active and limit further reduction.
	 When $N_t=16$ and $P^s=30$ dBm, the PSD of the transmitted OFDM symbol from antenna~1, together with the corresponding spectral mask, is shown in Fig.~\ref{fig:rate_power_64}. By enforcing the PSD constraints, the resulting spectrum stays below the mask across the entire OOB region. For comparison, typical ZF and MRT methods are considered, as well as ZF and MRT with frequency notching using identical $\mathbf{A}_n$ as in \cite{taheri2020joint}. It is observed that the proposed method results in the lowest sidelobes. The proposed method not only yields lower OOB emission, but also achieves significantly lower sum-MSE values as depicted in Fig. \ref{fig:mse_power_64}. This is due to the fact that the proposed method deploys an optimized combiner at each user, by which the signal is strengthened and decoded, whereas the benchmarks use fixed (non-optimized) combiners.

	\ifCLASSOPTIONcaptionsoff
	\newpage
	\fi

	\bibliographystyle{IEEEtran}
	
	\bibliography{ref_SDRA}
	\newpage	
\end{document}